\DeclareMathOperator{\per}{per}
\DeclareMathOperator{\sign}{sgn}
\begin{document}
\title{Permanent of bipartite graphs in terms of determinants}
%
%
\author{Surabhi Chakrabartty\inst{1} \and
Ranveer Singh\inst{2}}
\authorrunning{S. Chakrabartty and R. Singh}
%
\institute{
    Indian Institute of Science Education and Research Berhampur, Berhampur, India\\
    \email{surabhic20@iiserbpr.ac.in} \and
    Indian Institute of Technology Indore, Indore, India\\
    \email{ranveer@iiti.ac.in}
}

\maketitle              
\begin{abstract}
Computing the permanent of a $(0,1)$-matrix is a well-known $\#P$-complete problem. In this paper, we present an expression for the permanent of a bipartite graph in terms of the determinant of the graph and its subgraphs, obtained by successively removing rows and columns corresponding to vertices involved in vertex-disjoint $4k$-cycles. Our formula establishes a general relationship between the permanent and the determinant for any bipartite graph. Since computing the permanent of a biadjacency matrix is equivalent to counting the number of its perfect matchings, this approach also provides a more efficient method for counting perfect matchings in certain types of bipartite graphs.

\keywords{Permanent \and Bipartite graphs \and Perfect matching \and $4k$-cycles }
\end{abstract}
\section{Introduction}
A graph $G=(V, E)$ consists of a vertex set $V=\{ v_1, \ldots, v_n\}$ and an edge set $E$. It is said to be simple if it does not contain loops or multiple edges between a pair of vertices, and it is said to be undirected if the edges have no direction associated with them. In this paper, we will be concerned with simple and undirected graphs only. Adjacency matrix $A(G)$ of a graph $G$ with $n$ vertices is an $n \times n$ matrix defined by $A(G)= (a_{ij})$ where
\begin{align}
a_{ij} = 
\begin{cases}
    1, & \text{if there is an edge between } v_i \text{ and } v_j, \\ 
    0, & \text{otherwise}.
\end{cases}  \notag
\end{align}

A matching in a graph is a subset of edges such that no two edges share a common vertex. A perfect matching is a matching that covers every vertex of the graph. It is evident that a graph with an odd number of vertices cannot have a perfect matching. While determining the existence of a perfect matching in a general graph can be done efficiently~\cite{ref_article12}, counting the number of perfect matchings is computationally very hard.

A graph is said to be bipartite if its vertex set can be divided into two disjoint sets, such that no two vertices within the same set are adjacent. Given a bipartite graph $G=(U, V, E)$, its biadjacency matrix is a $|U| \times |V|$ matrix defined by $B(G)=(b_{ij})$ where 
\begin{align}
b_{ij} = 
\begin{cases}
    1, & \text{if there is an edge between } u_i \in U \text{ and } v_j \in V, \\ 
    0, & \text{otherwise}.
\end{cases}  \notag
\end{align}

The determinant and permanent of an $n \times n$ matrix $M = (a_{ij})$ are defined as 
\[ 
\det (M) = \sum_{\sigma \in S_{n}} \sign(\sigma) \prod_{i=1}^{n} a_{i,\sigma(i)}, \;\; 
\]
\text{and}\;\; 
\[
\per(M) = \sum_{\sigma \in S_{n}} \prod_{i=1}^{n} a_{i,\sigma(i)}, 
\] 
respectively, where $S_{n}$ is the set of all permutations of $\{1, 2, \dots, n\}$, the sign function $\sign (\sigma)$ is 1, if $\sigma$ is even and -1 if  $\sigma$ is odd, respectively. While the determinant of any matrix can be computed in $\mathcal{O}(n^{2.37})$, computing the permanent is known to be $\#P$-complete even for a $(0,1)$-matrix~\cite{ref_article1}. The determinant and permanent of a graph are defined as the determinant and permanent associated with its adjacency matrix, respectively. We denote them by $\det(G)$ and $\per(G)$.

The Pólya Permanent Problem~\cite{ref_article7} asks under what conditions the signs of certain nonzero entries in a $(0,1)$-matrix $A$ can be adjusted to obtain a matrix $B$ such that $\per(A) =  \det(B)$. This problem is known to be equivalent to twenty-three other graph-theoretic problems~\cite{ref_article6,ref_robertson1999}, including the problem of counting the number of perfect matchings in bipartite graphs. For a detailed survey and a complete characterization of minor-closed graph classes where perfect matchings can be computed efficiently, see~\cite{ref_killingvortex}. While computing the permanent of a general matrix $M$ is intractable in general, when $M$ is the biadjacency matrix of a bipartite planar graph, the permanent can be computed in polynomial time by reducing it to the Pfaffian of a suitably constructed skew-symmetric matrix, as established by Kasteleyn~\cite{ref_kasteleyn}.

Beyond its role in graph theory, the permanent also has important applications in quantum computing~\cite{ref_article5}. In Boson sampling, it helps determine the probability amplitudes of photon distributions in linear optical networks. It also shows up in statistical physics, particularly in models like the dimer model and monomer-dimer model, which are used to study phase transitions and other physical phenomena.

Writing the permanent as a function of the determinant has been a longstanding problem, more generally referred to as the Permanent vs. Determinant problem. In this paper, we aim to establish a relationship between the permanent and determinant of bipartite graphs using graph-theoretic concepts. As a corollary of this relation, we also observe the fact that the absence of $4k$-cycles in a bipartite graph guarantees that the permanent and determinant have the same absolute value. This formulation will not only provide insights into their general relationship but will also facilitate more efficient computation of the permanent for certain types of bipartite graphs.  

A cycle of length $l$ is a sequence of distinct vertices $u_1, u_2, \dots, u_l$ such that $u_i \sim  u_{i+1}$ for $i=1, \dots, l-1$ and $u_l\sim u_1$. A $4k$-cycle is a cycle with a length that is a multiple of four. Likewise, a $(4k+2)$-cycle is a cycle whose length exceeds a multiple of four by two. Since bipartite graphs do not contain odd-length cycles, the only cycles they can contain are $4k$-cycles and $(4k+2)$-cycles. We denote the set of all $4k$-cycles and $(4k+2)$-cycles of a graph $G$ by $C_{4k}(G)$ and $C_{4k+2}(G)$,  respectively.

We use the notation $T_z$ to represent any ordered tuple of $z$ mutually vertex-disjoint $4k$-cycles. Note that $T_1$ simply denotes any $4k$-cycle in the graph, as a single cycle is always vertex-disjoint by definition.

\begin{figure}[ht]
    \centering
    \begin{tikzpicture}

        \node[draw, circle, fill=black, inner sep=1.5pt, label=below:1] (1) at (0, 0) {};
        \node[draw, circle, fill=black, inner sep=1.5pt, label=below:2] (2) at (2, 0) {};
        \node[draw, circle, fill=black, inner sep=1.5pt, label=below:3] (3) at (3, 1) {};
        \node[draw, circle, fill=black, inner sep=1.5pt, label=left:4] (4) at (-1, 1) {};
        \node[draw, circle, fill=black, inner sep=1.5pt, label=above:5] (5) at (0, 2) {};
        \node[draw, circle, fill=black, inner sep=1.5pt, label=above:6] (6) at (2, 2) {};

        \draw[thick] (1) -- (2);
        \draw[thick] (2) -- (3);
        \draw[thick] (3) -- (4);
        \draw[thick] (4) -- (5);
        \draw[thick] (5) -- (6);
        \draw[thick] (4) -- (1);
        \draw[thick] (3) -- (6); 

        \node[draw, circle, fill=black, inner sep=1.5pt, label=below:7] (7) at (5, 0) {};
        \node[draw, circle, fill=black, inner sep=1.5pt, label=right:8] (8) at (6, 1) {};
        \node[draw, circle, fill=black, inner sep=1.5pt, label=above:9] (9) at (5, 2) {};
        \node[draw, circle, fill=black, inner sep=1.5pt, label=below:10] (10) at (4, 1) {};

        \draw[thick] (7) -- (8);
        \draw[thick] (8) -- (9);
        \draw[thick] (9) -- (10);
        \draw[thick] (10) -- (7);

        \draw[thick] (6) -- (9);

    \end{tikzpicture}
    \caption{Example of a bipartite graph}
    \label{fig:example-figure}
\end{figure}
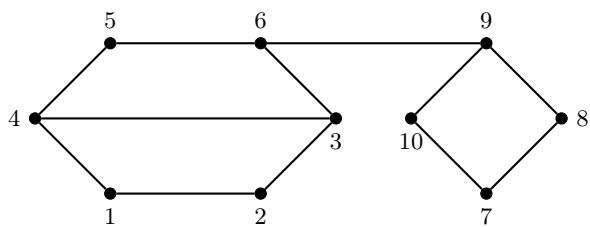
\begin{example}
     Consider the graph $G$ shown in Figure \ref{fig:example-figure}. The cycles in $G$ are:
     \begin{equation}
     C_{1}=(1, 2, 3, 4, 1), C_{2}=(3, 4, 5, 6, 3), C_{3}=(7, 8, 9, 10, 7),  
     C_{4}=(1, 2, 3, 6, 5, 4, 1), \notag
     \end{equation}
    Then the set of $4k$-cycles in $G$ is given by 
    \[
        C_{4k}(G) = \{C_{1}, C_{2}, C_{3} \},
    \]
    and the set of $(4k+2)$-cycles is 
    \[
        C_{4k+2}(G) = \{C_{4}\}.
    \]
    Now, the single-cycle tuples are:
    \begin{align}
        T_1^{1} = C_1, \quad T_1^{2} = C_2, \quad T_1^{3} = C_3. \notag
    \end{align}
    The two-cycle vertex-disjoint tuples are:
    \begin{align}
        T_2^{1} = (C_1, C_3), \quad T_2^{2} = (C_2, C_3),  \quad  T_2^{3} = (C_3, C_1), \quad T_2^{4} = (C_3, C_2). \notag
    \end{align}
\end{example}

The rest of the paper is organized as follows: In Section \ref{Section_2}, we give the graph-theoretic interpretation of the permanent by introducing Sachs subgraphs. In the main result (Section \ref{Section_3}), we demonstrate how the determinant of the submatrices obtained after removing rows and columns corresponding to $4k$-cycles can be used to compute the permanent of a bipartite graph. Finally, in the discussion (Section \ref{Section_4}), we provide a few conditions such that the permanent of such a type of bipartite graph can be computed efficiently.

\section{Permanent using Sachs Subgraphs} \label{Section_2}
A Sachs subgraph is a subgraph in which every component is either a cycle or an individual edge. We observe that, for a bipartite graph, there is no Sachs subgraph on an odd number of vertices, as bipartite graphs do not contain odd-length cycles. We will use the notation 
$U_i$ to denote any Sachs subgraph of $G$ on $i$ vertices.

In the graph shown in Figure \ref{fig:example-figure}, some possible Sachs subgraphs on six vertices are 
    \begin{align}
        U_{6}^{1} &= \{C_4\}, \quad U_{6}^{2} = \{C_{1} \cup (5, 6)\}, \quad U_{6}^{3} = \{C_{1} \cup (7, 8)\}, \notag \\
        U_{6}^{4} &= \{C_{2} \cup (1, 2)\}, \quad U_{6}^{5} = \{C_{3} \cup (3, 4)\}, \notag \\
        U_{6}^{6} &= \{(1, 4) \cup (6, 9) \cup (7, 8)\}, \quad 
        U_{6}^{7} = \{(1, 2) \cup (3, 4) \cup (5, 6)\}. \notag
    \end{align}
In total, there are 94 Sachs subgraphs of $G$ on six vertices.

Some possible Sachs subgraphs of $G$ on four vertices are
\begin{align}
U_{4}^{1} &= \{C_{1}\}, \quad U_{4}^{2} = \{C_{2}\}, \quad U_{4}^{3} = \{C_{3}\}. \notag \\
U_{4}^{4} &= \{(2, 3) \cup (9, 10)\}, \quad U_{4}^{5} = \{(3, 6) \cup (1, 2)\}. \notag 
\end{align}
In total, there are 51 Sachs subgraphs of $G$ on four vertices.
Furthermore, each Sachs subgraph of $G$ on two vertices contains exactly one edge of the graph $G$. 

Hence, from this example, we observe that in a bipartite graph, the existence of a Sachs subgraph on $i$ vertices, implies the existence of Sachs subgraph on $2, 4, \ldots, i-2$ vertices. This is because if the Sachs subgraph has even one cycle, then we can decompose it to disjoint edges or copies of $K_2$ by removing alternate edges, and if the Sachs subgraph has no cycles and comprises edges only, then we can remove the edges one by one to obtain smaller Sachs subgraphs. We will use this fact for the construction of different Sachs subgraphs from the known ones.

\subsection{Permanent using the determinant of subgraphs}

We recall the formula for computing the determinant and permanent using Sachs subgraphs in the following way  
~\cite{ref_book1,ref_article2}:
\begin{equation}
    \det(G) = \sum_{U_{n}}(-1)^{n-p(U_{n})}2^{c(U_{n})} \text{ and } \per(G)= \sum_{U_{n}} 2^{c(U_{n})}, \label{eq:0.1}
\end{equation}
where the summation is taken over all the Sachs subgraphs $U_{n}$ of $G$ on $n$ (where $n=|V|$) vertices, $p(U_{n})$ denotes the number of components in $U_{n}$, and $c(U_{n})$ denotes the number of cycles in $U_{n}$.

As observed earlier, in a bipartite graph,  there is no Sachs subgraph on an odd number of vertices; thus, $\det(G)=\per(G)=0$ when $n$ is odd~\cite{ref_article8,ref_article9}. Thus, from now on, we will consider $n$ to be even. 

Our goal is to express the permanent of any bipartite graph $G$ in terms of determinants, thereby establishing a general relationship between the two.  Equation~\eqref{eq:0.1} can be written as
 \begin{align}
    \det(G)= \sum_{U_{n}} (-1)^{s+t+r} 2^{s+t} \text{ and } \per(G)= \sum_{U_{n}} 2^{s+t}, \label{eq:1.3}
\end{align}
respectively, where $U_{n}$ varies over the set of Sachs subgraphs of $G$ on $n$  vertices, $s$ denotes the number of $4k$ cycles in $U_n$, $t$ denotes the number of $(4k+2)$-cycles in $U_{n}$ and $r$ denotes the number of edges in $U_{n}$. That is, $U_n$ can be expressed in the following way: 
\[
U_{n}= \{ C_{4k_{1}}\cup \dots \cup C_{4k_{s}} \} \cup \{C_{4l_{1}+2}\cup \dots  \cup C_{4l_{t}+2} \} \cup  \underbrace{ \{K_2 \cup \dots \cup K_2\} }_{r \text{ times}}, 
\]
for some positive integers $k_{1}, \dots, k_{s}, l_{1}, \dots, l_{t}$. We use the notation $K_2$ to represent any edge; however, it is important to note that in our case, each $K_2$ is distinct and does not share any vertices with another $K_2$ in the same Sachs subgraph.  
 As $U_n$ represents a Sachs subgraph on $n$ vertices, we must have the following equality:
\begin{align}
    &n= 4(k_{1} + \dots + k_{s}) + 4(l_{1} + \dots + l_{t}) + 2(t+r).  \notag 
\end{align}
As we have taken $n$ to be even, we have
\begin{align}
    \frac{n}{2} \equiv t + r\pmod{2}. \notag
\end{align} 
  Therefore, using Equation~\eqref{eq:1.3}, determinant reduces to 
\begin{align}
    \det(G) = \sum_{U_{n}} (-1)^{s+\frac{n}{2}} 2^{s+t}. \label{eq:1.4} 
\end{align}

\subsubsection{\textbf{Determinant of subgraphs using $4k$-cycles}}

 We will use the notation $\det(G \backslash R)$ to denote the determinant of the submatrix obtained by removing the rows and columns corresponding to the vertices in the $4k$-cycle $R$ of $G$. Now, we use Formula~\eqref{eq:1.4} to obtain:
\begin{align}
    \sum_{R_{1} \in C_{4k}(G)}  \det(G \backslash R_{1}) 
    = \sum_{R_{1} \in C_{4k}(G)}   \sum_{W_{n - |V(R_1)|}} (-1)^{s^{\prime} + \frac{n - |V(R_1)|}{2}} 2^{s^{\prime}+t^{\prime}} , \notag 
\end{align}
where the last summation is over all the Sachs subgraph $W_{n - |V(R_1)|}$ of $G\backslash R_{1}$, $s^{\prime}$ denotes the number of $4k$ cycles in $W_{n - |V(R_1)|}$ and $t^{\prime}$ denotes the number of $(4k+2)$-cycles in $W_{n - |V(R_1)|}$. 

Using the fact that $(-1)^{\frac{n - |V(R_1)|}{2}} = (-1)^{\frac{n}{2}}$ (since $|V(R_1)|$ is a multiple of 4), and the one-to-one correspondence between the Sachs subgraph $U_n$ in $G$ containing $R_1$ and the Sachs subgraph $W_{n - |V(R_1)|}$ in $G \setminus R_1$, and noting that $s = s' + 1$ (since $s$ represents the number of $4k$-cycles in $U_n$ and always includes one more—namely, $R_1$—than the number $s'$ in $W_{n - |V(R_1)|}$), we obtain
\begin{align}
    \sum_{R_{1} \in C_{4k}(G)} \det(G \backslash R_{1}) 
    = \sum_{R_{1} \in C_{4k}(G)} \;\;\; \sum_{\substack{U_{n} \text{ containing} \\ R_{1}}} (-1)^{s-1+\frac{n}{2}} 2^{s-1+t}. \label{eq:1.8}
\end{align} 
We will iteratively apply Equation~\eqref{eq:1.8} by using two disjoint $4k$-cycles, then three, then four, and so on, while computing the determinant for each resulting reduced subgraph.

\section{The main result} \label{Section_3}

\begin{theorem}\label{theorem1}
Let $G$ be a bipartite graph with $n$ vertices. If $G$ contains at most $m$ vertex-disjoint $4k$-cycles then
    \begin{align}
            \per(G) = \begin{cases}
        (-1)^{n/2} \displaystyle\sum_{z=0}^{m} \frac{4^z}{z!} \sum_{T_z} \det(G \setminus T_z), & \text{if } n \text{ is even}, \\
        0, & \text{if } n \text{ is odd}, 
    \end{cases} \label{eq:main_theorem}
    \end{align}
where the sum varies over all possible ordered tuples $T_z$ of $z$ mutually vertex-disjoint $4k$-cycles in $G$.  
\end{theorem}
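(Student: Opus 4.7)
The plan is to iteratively apply the identity underlying Equation~\eqref{eq:1.8}, generalized from one $4k$-cycle to $z$ vertex-disjoint $4k$-cycles, and then to collapse the resulting weighted sum over $z$ via the binomial identity $(2-1)^s = 1$. The odd case is immediate: a bipartite graph on an odd number of vertices has no Sachs subgraph on all of its vertices, so both sides vanish. I therefore focus on even $n$.

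First I generalize Equation~\eqref{eq:1.8} by applying Equation~\eqref{eq:1.4} to the reduced graph $G\setminus T_z$ for an arbitrary ordered tuple $T_z$ of $z$ vertex-disjoint $4k$-cycles. Since $|V(T_z)|$ is a multiple of $4$, the sign $(-1)^{(n-|V(T_z)|)/2}$ equals $(-1)^{n/2}$, and the bijection between Sachs subgraphs $W$ of $G\setminus T_z$ on $n-|V(T_z)|$ vertices and Sachs subgraphs $U_n \supseteq T_z$ of $G$ (with $s' = s - z$ and $t' = t$) yields
\begin{align}
\det(G\setminus T_z) = \sum_{U_n \supseteq T_z}(-1)^{s-z+n/2}\,2^{s-z+t}, \notag
\end{align}
where $s,t$ denote the number of $4k$- and $(4k+2)$-cycles in $U_n$. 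Summing over all ordered $T_z$, each fixed $U_n$ with $s$ $4k$-cycles is counted exactly $s!/(s-z)!$ times---once per ordered choice of $z$ of its cycles---so
\begin{align}
\sum_{T_z}\det(G\setminus T_z)=\sum_{U_n}\frac{s!}{(s-z)!}(-1)^{s-z+n/2}\,2^{s-z+t}. \notag
\end{align}

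Multiplying by $(-1)^{n/2}\,4^z/z!$ and summing $z$ from $0$ to $m$, and using $s\le m$ so that the inner sum may be terminated at $z=s$, the factors regroup as $s!/((s-z)!\,z!)=\binom{s}{z}$ and $4^z\cdot 2^{s-z}=2^{s+z}$, turning the right-hand side of~\eqref{eq:main_theorem} into
\begin{align}
\sum_{U_n} 2^{s+t}\sum_{z=0}^{s}\binom{s}{z}(-1)^{s-z}2^{z}. \notag
\end{align}
The inner sum equals $(2-1)^s=1$ by the binomial theorem, leaving $\sum_{U_n}2^{s+t}=\per(G)$ via Equation~\eqref{eq:1.3}.

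The main obstacle is essentially bookkeeping: keeping the sign $(-1)^{n/2}$ stable across each cycle removal, passing from ordered $z$-tuples of $4k$-cycles to a plain binomial coefficient (where the $1/z!$ in the weight is exactly what converts $s!/(s-z)!$ into $\binom{s}{z}$), and recognizing that the weight $4^z$ is precisely what rescales $2^{s-z}$ to $2^{s+z}$ so that the binomial sum telescopes to $1$. Once these three ingredients align, the collapse is mechanical.
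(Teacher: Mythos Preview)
Your proof is correct and follows essentially the same approach as the paper: both rest on the bijection between Sachs subgraphs of $G\setminus T_z$ and Sachs subgraphs $U_n\supseteq T_z$ of $G$, the resulting falling-factorial count of ordered $z$-tuples inside a $U_n$ with $s$ many $4k$-cycles, and the binomial collapse $\sum_{z}\binom{s}{z}(-1)^{s-z}2^{z}=1$. The only difference is packaging---the paper first works out $z=1,2$ explicitly, groups Sachs subgraphs by their number $j$ of $4k$-cycles, and states the general $z$ formula as an induction, whereas you obtain the general $z$ case in one shot by applying Equation~\eqref{eq:1.4} directly to $G\setminus T_z$; this is a cleaner presentation of the same argument.
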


\begin{proof} For a bipartite graph with an odd number of vertices, the permanent is zero, as follows trivially from our earlier observations about the non-existence of a Sachs subgraph on an odd number of vertices in a bipartite graph. Therefore, it remains to prove the case when the number of vertices is even.

Recall from Equation~\eqref{eq:1.8} that
\begin{align}
 \sum_{T_1}  \det(G \backslash T_1) 
 =&
    \sum_{R_{1} \in C_{4k}(G)} \det(G \backslash R_{1}) \notag \\ 
     =& \sum_{R_{1} \in C_{4k}(G)} \sum_{\substack{U_{n} \text{ containing} \\  R_{1}}} (-1)^{s-1+\frac{n}{2}} 2^{s+t-1}, \notag  
\end{align}
where the first summation on the right hand side, is taken over all $4k$-cycles $R_1$ of $G$, the second summation is over all Sachs subgraphs $U_n$ of $G$ on  $n$ vertices that contain $R_1$; $s$ denotes the number of $4k$-cycles in $U_n$, and $t$ denotes the number of $(4k + 2)$-cycles in $U_n$.
Notice that on the right-hand side of the equation, each Sachs subgraph is counted exactly as many times as the number of $4k$-cycles it contains.
Hence,
\begin{align}
     \sum_{T_1}  \det(G \backslash T_1) 
     = (-1)^{\frac{n}{2}} \sum_{j = 1, 2, 3, \dots, m} (-1)^{j-1} j 2^{j-1} \! \! \sum_{\substack{U_{n} \text{ containing} \\ \text{exactly  } j \; 4k \text{-cycles}}}  2^{t}. \notag 
\end{align}
Similarly, using Equation~\eqref{eq:1.8} again, we observe that for each $R_1 \in C_{4k}(G)$, we have
\begin{align}
    \sum_{R_{2} \in C_{4k}(G\backslash R_{1})}  & \det(G \backslash R_{1}\backslash R_{2}) \notag \\
    & =  (-1)^{\frac{n-|V(R_1)|}{2}} \sum_{\ R_{2} \in C_{4k}(G\backslash R_{1})} \sum_{\ \substack{W_{n - |V(R_1)|} \\ \text{containing } R_{2}}}  (-1)^{s^{\prime}-1}
    2^{s^{\prime}+t^{\prime}-1} ,
\end{align}
where the last summation is taken over all the Sachs subgraph $W_{n - |V(R_1)|}$  of $G\backslash R_{1}$ that contain $R_{2}$, $s^{\prime}$ denotes the number of $4k$-cycles in $W_{n - |V(R_1)|}$, and $t^{\prime}$ denotes the number of $(4k+2)$-cycles in $W_{n - |V(R_1)|}$. 
Following the same trick as before, we note that in the above equation's right-hand side, each Sachs subgraph is summed exactly as many times as the number of $4k$-cycles it contains.
Hence,
\begin{align} 
     \sum_{T_2}  \det(G \backslash T_2) & = \sum_{R_{1} \in C_{4k}(G)}  \sum_{R_{2} \in C_{4k}(G\backslash R_{1})} \det(G \backslash R_{1}\backslash R_{2}) \notag  \\
    & = (-1)^{\frac{n}{2}}  \sum_{R_{1} \in C_{4k}(G)}  \sum_{j = 1, 2, \dots, m} (-1)^{j-1} j 2^{j-1} \sum_{\substack{W_{n - |V(R_1)|} \text{containing} \\ \text{exactly  } j \; 4k \text{-cycles} }}  2^{t^{\prime}} \notag \\
    & = (-1)^{\frac{n}{2}} \sum_{R_{1} \in C_{4k}(G)} \sum_{W_{n - |V(R_1)|}} (-1)^{s^{\prime}-1} s^{\prime} 2^{s^{\prime}+t^{\prime}-1}. \notag 
\end{align}
Using the fact that there is a one-to-one correspondence between the Sachs subgraph $U_{n}$ in $G$ containing $R_{1}$ and the Sachs subgraph $W_{n - |V(R_1)|}$ in $G\backslash R_{1}$, and noting that $s = s' + 1$ (since $s$ represents the number of $4k$-cycles in $U_{n}$ containing $R_{1}$ and hence starts from $1$, always containing one $4k$-cycle more—namely, $R_{1}$—than $W_{n - |V(R_1)|}$), we obtain 
\begin{align}
     \sum_{T_2}  \det(G \backslash T_2)
    &  = (-1)^{\frac{n}{2}} \sum_{\ R_{1} \in C_{4k}(G)} \sum_{ \ \substack{U_{n} \text{ containing} \\  R_{1}}} (-1)^{s-2} (s-1) 2^{s+t-2} \notag \\ 
    & = (-1)^{\frac{n}{2}} \sum_{j = 2, 3, 4 \dots, m} (-1)^{j-2} (j-1) j 2^{j-2} \!\!\!\! \sum_{\substack{U_{n} \text{ containing} \\ \text{exactly  } j \; 4k \text{-cycles}}} 2^{t} \notag \\
     & = (-1)^{\frac{n}{2}} \sum_{j = 2, 3, 4 \dots, m} (-1)^{j-2} 2! \binom{j}{j-2} 2^{j-2} \!\!\!\! \sum_{\substack{U_{n} \text{ containing} \\ \text{exactly  } j \; 4k \text{-cycles}}} 2^{t}. \notag 
\end{align}
Similarly, it can be verified by mathematical induction on $z$ that for any $z \in \{ 0, 1, 2, \dots, m \}$,
\begin{align} 
    \sum_{T_z}  \det(G \backslash T_z) 
     &  = (-1)^{\frac{n}{2}} \sum_{j = z}^ {m}  (-1)^{j-z} z! \binom{j}{j-z} 2^{j-z} \sum_{\substack{U_{n} \text{\ containing} \\ \text{exactly } j \; 4k \text{-cycles}}} \!\!\!\!\! 2^{t}. \label{eq:3}
\end{align}
Next, for each value of $z \in \{0, 1, 2, \dots, m \}$, we multiply Equation~\eqref{eq:3}  by the coefficient $\frac{4^z}{z!}$ and sum over all values of $z$ to obtain 
\begin{equation}
\begin{aligned}
    \sum_{z=0}^{m} \frac{4^z}{z!} \sum_{T_z } \det(G \setminus T_z) 
    &=  (-1)^{\frac{n}{2}} \sum_{z=0}^{m}  \frac{4^z}{z!}    \sum_{j = z}^ {m} (-1)^{j-z} z! \binom{j}{j-z} \, 2^{j-z} \!\!\!\!\!\!\!\!  \sum_{\substack{ U_{n} \text{ containing} \\ \text{exactly } j \, 4k\text{-cycles}}}  \!\!\!\!\!\!\!\! 2^{t}   \notag \\
    &  = (-1)^{\frac{n}{2}}  \sum_{z=0}^{m}   \sum_{j = z}^ {m} (-1)^{j-z} \binom{j}{j-z} \, 2^{j+z}  \!\!\!\!\!\!\!\! \sum_{\substack{ U_{n} \text{ containing} \\ \text{exactly } j \, 4k \text{-cycles}}}  \!\!\!\!\!\!\!\! 2^{t}   \notag \\
    & = (-1)^{\frac{n}{2}} \sum_{j = 0}^ {m} 2^{j} \sum_{z=0}^{j}  (-1)^{j-z} \binom{j}{j-z} 2^z \!\!\!\!  \sum_{\substack{ U_{n} \text{ containing} \\ \text{exactly } j \, 4k \text{-cycles}}} \!\!\!\!\!\!\! 2^{t}  . \notag  
\end{aligned}
\end{equation}
By using the binomial theorem, we have:
\begin{align}
    \sum_{z=0}^{j} (-1)^{j-z} \binom{j}{j-z} 2^z & = (-1+2)^j  \notag \\
    & = 1. \notag 
\end{align}
Hence, 
\begin{equation}  
\begin{aligned}
      \sum_{z=0}^{m} \frac{4^z}{z!} \sum_{T_z } \det(G \setminus T_z) 
    &   = (-1)^{\frac{n}{2}} \sum_{j = 0}^ {m} 2^{j}  \sum_{\substack{ U_{n} \text{ containing} \\ \text{exactly  } j \; 4k \text{-cycles}}} 2^{t}   \notag \\
    & = (-1)^{\frac{n}{2}} \sum_{U_{n}} 2^{s+t}\notag \\
    & = (-1)^{\frac{n}{2}} \per(G). \notag
\end{aligned}
\end{equation}
This completes the proof. \qed 
\end{proof}

\begin{corollary}
    If a bipartite graph $G$ is free of $4k$-cycles then
    \begin{equation}
    \per(G)= (-1)^\frac{n}{2} \det(G).
    \end{equation}
\end{corollary}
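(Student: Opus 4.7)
The plan is to obtain this statement as an immediate specialization of Theorem~\ref{theorem1}. The hypothesis that $G$ is free of $4k$-cycles means that no $4k$-cycle exists in $G$ at all, so in particular there is no collection of mutually vertex-disjoint $4k$-cycles of positive size. In the notation of the theorem this forces $m = 0$, so the outer sum $\sum_{z=0}^{m}$ collapses to its single term at $z = 0$.

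For $z = 0$, the object $T_0$ is the empty ordered tuple of cycles, so $G \setminus T_0$ coincides with $G$, and the coefficient $\tfrac{4^0}{0!}$ equals $1$. Substituting into Equation~\eqref{eq:main_theorem} for even $n$ yields $\per(G) = (-1)^{n/2}\det(G)$ directly. When $n$ is odd, both sides vanish: the permanent is zero by the observation in Section~\ref{Section_2} that a bipartite graph on an odd number of vertices admits no Sachs subgraph on $n$ vertices, and the determinant is zero for the same reason via the Sachs formula~\eqref{eq:0.1}, so the identity $\per(G)=(-1)^{n/2}\det(G)$ holds trivially.

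No real obstacle is anticipated; the corollary is essentially a one-line consequence of Theorem~\ref{theorem1}. The only point worth double-checking is the conventional reading of the $z=0$ summand, namely that the empty tuple contributes exactly one term, $\det(G)$, with weight $1$. This is precisely the base case that the inductive proof of Theorem~\ref{theorem1} already rests on, so nothing new needs to be verified here.
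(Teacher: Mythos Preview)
Your proposal is correct and matches the paper's intended derivation: the corollary is stated immediately after Theorem~\ref{theorem1} with no separate proof, so it is meant to be read exactly as you describe, namely the $m=0$ specialization where the sum in Equation~\eqref{eq:main_theorem} reduces to the single term $\det(G)$ with coefficient $1$. Your handling of the odd-$n$ case is a harmless addition; the paper has already restricted to even $n$ before this point.
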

In simpler words, the absence of 
$4k$-cycles in a bipartite graph guarantee that the permanent and determinant have the same absolute value.

\begin{theorem} \label{theorem2}
    Let $G$ be a bipartite graph and let $G_i$ denote an induced subgraph of $G$ on $i$ vertices. Then $G$ contains at most $m$ vertex-disjoint $4k$-cycles if and only if 
    \begin{align}
            \per(G_i) = (-1)^{i/2} \sum_{z=0}^{m} \frac{4^z}{z!} \sum_{T_z } \det(G_i \setminus T_z), \label{eq:2}
    \end{align}
 holds for all such $G_i$, where \( i \in \{0, 2, 4, \ldots \} \), and the inner sum varies over all possible ordered tuples $T_z$ of $z$ mutually vertex-disjoint $4k$-cycles in $G_i$.
 \end{theorem}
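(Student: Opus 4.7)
The plan is to handle the two implications of the iff separately. The $(\Rightarrow)$ direction follows immediately from Theorem~\ref{theorem1}: if $G$ has at most $m$ vertex-disjoint $4k$-cycles, then so does every induced subgraph $G_i$, since any vertex-disjoint family of $4k$-cycles in $G_i$ is also one in $G$. Applying Theorem~\ref{theorem1} to each such $G_i$ (the identity being vacuous when $i$ is odd, as both sides vanish) yields the desired formula. The substantive direction is $(\Leftarrow)$, which I would establish by strong induction on $n=|V(G)|$, with the trivial base case $n\le 2$ having no cycles at all.

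For the inductive step, fix $G$ on $n$ vertices such that Equation~\eqref{eq:2} holds on every even-vertex induced subgraph of $G$. For any proper induced subgraph $G'$ of $G$, the hypothesis descends to every induced subgraph of $G'$ (since they are also induced subgraphs of $G$), so by the inductive hypothesis every proper induced subgraph of $G$ contains at most $m$ vertex-disjoint $4k$-cycles. If $G$ itself contained at least $m+2$ vertex-disjoint $4k$-cycles, then deleting the vertices of any one such cycle would yield a proper induced subgraph still containing at least $m+1$ vertex-disjoint $4k$-cycles, contradicting the inductive hypothesis. Hence the only remaining case to rule out is that $G$ contains exactly $m+1$ vertex-disjoint $4k$-cycles.

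To dispatch this remaining case, I would first establish a covering property: every ordered $(m+1)$-tuple $T_{m+1}$ of vertex-disjoint $4k$-cycles in $G$ must exhaust $V(G)$, for otherwise deleting an uncovered vertex would produce a proper induced subgraph still containing those $m+1$ cycles, again contradicting the inductive hypothesis. When $n$ is odd this is already impossible, since cycle lengths are even, giving an immediate contradiction. When $n$ is even, I would compare Theorem~\ref{theorem1} applied to $G$ (with maximum $m+1$) against the hypothesis Equation~\eqref{eq:2} on $G$ itself and subtract to obtain
\begin{equation*}
\frac{4^{m+1}}{(m+1)!}\sum_{T_{m+1}}\det(G\setminus T_{m+1}) = 0.
\end{equation*}
By the covering property each $G\setminus T_{m+1}$ is the empty graph, whose determinant equals the empty product $1$; and the number of ordered $(m+1)$-tuples of vertex-disjoint $4k$-cycles in $G$ is at least $(m+1)!>0$. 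The left-hand side is therefore strictly positive, yielding the required contradiction. The main obstacle lies precisely in this final $M=m+1$ case: the inductive hypothesis alone does not close the argument, and one must combine the covering property with the algebraic identity obtained by subtracting Equation~\eqref{eq:2} from Theorem~\ref{theorem1}.
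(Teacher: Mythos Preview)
Your proof is correct and takes a genuinely different route from the paper. The paper works entirely inside the Sachs-subgraph expansion: it rewrites both sides of \eqref{eq:2} as sums over $U_i$, cancels the $j\le m$ contributions via the binomial identity, and is left with an equation (their \eqref{eq:8}) that must hold for every even $i$; it then argues by choosing a minimal Sachs subgraph $U_q'$ with $m+2$ disjoint $4k$-cycles and evaluating \eqref{eq:8} at a carefully chosen smaller $i=p$. Your argument instead uses strong induction on $|V(G)|$: the inductive hypothesis forces every proper induced subgraph to have at most $m$ disjoint $4k$-cycles, which immediately kills the $\ge m+2$ case and yields the covering property in the exactly-$(m+1)$ case; the contradiction then drops out of the single subtraction of \eqref{eq:2} at $G_i=G$ from Theorem~\ref{theorem1} applied with parameter $m+1$, since each $\det(G\setminus T_{m+1})$ is the determinant of the empty matrix. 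Your route is cleaner for the backward implication---no Sachs-subgraph bookkeeping is needed and the combinatorics is reduced to the covering observation---while the paper's route stays closer to the machinery of Section~\ref{Section_2} and makes explicit how the binomial coefficients in \eqref{eq:3} govern the error when the sum in \eqref{eq:2} is truncated at $m$.
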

 
\begin{proof}
Since for each even \( i \), \( G_i \) is itself a bipartite graph on \( i \) vertices, the forward implication follows directly from Theorem~\ref{theorem1}.

Let us prove the backward implication. Given that Equation~\eqref{eq:2} holds,  we must show that $G$ does not contain more than  $m$ vertex-disjoint $4k$-cycles. It is equivalent to showing that none of the Sachs subgraphs of $G$ contains more than $m$ $4k$-cycles.

Let us assume that a Sachs subgraph contains at least $(m+1)$ $4k$-cycles. Now, by keeping any $m+1$ $4k$-cycles of this Sachs subgraph and removing the alternate edges from the other $4k$-cycles, we can derive a Sachs subgraph containing exactly $m+1$ $4k$-cycles.

For all $i \in \{0, 2, 4, \dots\}$, we have:
\begin{align}
    \per(G_i) = (-1)^{i/2} \sum_{z=0}^{m} \frac{4^z}{z!} \sum_{T_z } \det(G_i \setminus T_z).
\end{align}
Rewriting in terms of Sachs' subgraphs, we have
\begin{align}
    \sum_{j = 0}^{n} 2^{j} \sum_{\substack{ U_{i} \text{ containing} \\ \text{exactly  } j \; 4k \text{-cycles}}} \!\!\!\! 2^{t}
    & = \sum_{z=0}^{m} \frac{4^z}{z!} \; \; \sum_{j = z}^{n}  (-1)^{j-z} z! \binom{j}{j-z} \, 2^{j-z} \!\!\!\!\!\!\!\!\!\! \sum_{\substack{ U_{i} \text{ containing} \\ \text{exactly } j \, 4k \text{-cycles}}} \!\!\!\!\!\!\!\!\!\! 2^{t}  \notag \\  
    & =\sum_{j = 0}^{n} 2^{j} \sum_{z=0}^{\min(j, m)} (-1)^{j-z} \binom{j}{j-z} 2^z \sum_{\substack{ U_{i} \text{ containing} \\ \text{exactly } j \, 4k \text{-cycles}}} \!\!\!\!\!\! 2^{t}.  \label{eq:6} 
\end{align}
For $j \leq m$, we use the binomial theorem to obtain
\begin{align}
    \sum_{z=0}^{j} (-1)^{j-z} \binom{j}{j-z} 2^z 
    & = (-1+2)^{j} \notag \\
    & = 1. \label{eq:7}
\end{align}
For $j=m+1$, the binomial theorem results in
\begin{align}
    \sum_{z=0}^{m} (-1)^{m+1-k} \binom{m+1}{m+1-k} 2^z &= (-1+2)^{m+1} -2^{m+1} \notag \\
    & = 1-2^{m+1}. \label{eq:7.5}
\end{align}
We substitute Equation~(\ref{eq:7}) into Equation~(\ref{eq:6}) to obtain:
\begin{align}
    \sum_{j = 0}^{n} 2^{j} \sum_{\substack{ U_{i} \text{ containing} \\ \text{exactly  } j \; 4k \text{-cycles}}} \!\!\!\! 2^{t}   &= \;\;\; \sum_{j = 0}^{m} 2^{j} \sum_{\substack{ U_{i} \text{ containing} \\ \text{exactly } j \, 4k \text{-cycles}}} \!\!\!\!\!\! 2^{t} \notag \\
    & \quad +  \sum_{j = m+1}^{n} 2^{j} \sum_{z=0}^ {m}  (-1)^{j-z} \binom{j}{j-z} 2^z \!\!\!\!\!\! \sum_{\substack{ U_{i} \text{ containing} \\ \text{exactly } j \, 4k \text{-cycles}}} \!\!\!\!\!\! 2^{t}. \notag 
\end{align}
On simplification, we have
\begin{align}
    \sum_{j = m+1}^{n}  2^{j} (1 - \sum_{z=0}^ {m}  (-1)^{j-z} \binom{j}{j-z} 2^z) \sum_{\substack{ U_{i} \text{ containing} \\ \text{exactly } j \,  4k \text{-cycles}}} \!\!\!\! 2^{t} = 0. \notag 
\end{align}
Substituting the value from Equation~\eqref{eq:7.5}, we obtain  
\begin{align}
      \sum_{j = m+2}^{n}  2^{j} (1 - \sum_{z=0}^ {m}  &(-1)^{j-z} \binom{j}{j-z} 2^z) \!\!\!\!\!\!\!\! \sum_{\substack{ U_{i} \text{ containing} \\ \text{exactly } j \,  4k \text{-cycles}}} \!\!\!\!\!\!\!\! 2^{t} \;\; + \;\; 2^{2m+2} \!\!\!\!\!\!\!\!\!\! \sum_{\substack{ U_{i} \text{ containing} \\ \text{exactly } m+1 \, 4k \text{-cycles}}} \!\!\!\!\!\!\!\! 2^{t}  = 0.  \label{eq:8}
\end{align}
From Equation~\eqref{eq:8}, it follows that there must exist a Sachs subgraph of $G$ with more than $m+1$ $4k$-cycles. If no such subgraph exists, the first sum in Equation~\eqref{eq:8} would be zero. Consequently, the last term must also be zero, as the right-hand side is zero. This contradicts the assumption that a Sachs subgraph exists with exactly $m+1$ $4k$-cycles. Hence, we conclude that a Sachs subgraph of $G$ with at least $m+2$ $4k$-cycles exists. By a similar construction as in our earlier arguments, this further implies the existence of a Sachs subgraph with exactly $m+2$ $4k$-cycles.
 
Among all Sachs subgraphs of $G$ containing exactly $m+2$ $4k$-cycles, we choose the Sachs subgraph containing the least number of vertices and call it $U_{q}^{\prime}$. 
 Also, observe that $U_{q}^{\prime}$ must not contain any $(4k+2)$-cycles or edges due to minimality of $q$. Hence, 
\[
U_{q}^{\prime}= C_{1} \cup C_{2} \cup \dots \cup C_{m+2},
\]
for some vertex-disjoint $C_{1}, C_{2}, \dots, C_{m+2} \in C_{4k}(G)$. 

Let $p = \mid V(C_{1}) \mid + \mid V(C_{2})  \mid + \dots + \mid V(C_{m+1})  \mid < q$. We know that Equation~\eqref{eq:8} is satisfied for all $i \in \{0, 2, 4, \dots \}$, so it should be satisfied for $i=p$ as well. As $p<q$, there is no Sachs subgraph of $G$ on $p$ vertices which contains more than $m+1$ $4k$-cycles. Hence, by setting $i=p$ in Equation~\eqref{eq:8}, we obtain
\begin{align}
    2^{2m+2} \sum_{\substack{U_{p} \text{containing} \\ \text{exactly } m+1 \   4k \text{-cycles}}} 2^{t} = 0. \notag
\end{align}
This is a contradiction to our assumption since there exists a Sachs subgraph $U_{p}^{\prime}= U_{q}^{\prime} \setminus C_{m+2} = C_{1} \cup C_{2} \cup \dots \cup C_{m+1}$ and hence the sum is non-zero. Thus, no Sachs subgraph of $G$ contains more than $m$ $4k$-cycles. Therefore, the backward implication of Theorem \ref{theorem2} holds.
This completes the proof.  \qed
\end{proof}

\begin{example}
Consider the bipartite graph $G$ in Figure \ref{fig:example-figure}. Since $C_1$ and $C_2$ share a common vertex, only one of them can be included at a time in the tuple of vertex-disjoint $4k$-cycles along with $C_3$; hence, the maximum number of vertex-disjoint $4k$-cycles in $G$ is 2. Therefore, the permanent of $G$ can be computed using Equation~\eqref{eq:main_theorem} with $m = 2$.

\begin{align}
     \per(G) &=  (-1)^{n/2} \sum_{z=0}^{2} \frac{4^z}{z!} \sum_{T_z } \det(G \setminus T_z) \notag \\
     & = (-1)^{n/2} \Bigg [ \det(G) + 4 \sum_{T_1} \det(G \setminus T_1) + \frac{4^{2}}{2!} \sum_{T_2} \det(G \setminus T_2) \Bigg]. \label{eq:example} 
\end{align}
as $n$ is even. Now, 
\begin{align}
    \det(G)=0. \notag
\end{align}
Recall that $C_{4k}(G) = \{C_{1}, C_{2},C_{3} \}$ . Hence, 
\begin{align}
    \sum_{T_1} \det(G \backslash T_1) 
    &= \sum_{R_{1} \in C_{4k}(G)} \det(G \backslash R_{1}) \notag \\
    &= \det(G \backslash C_{1}) + \det(G \backslash C_{2})+\det(G \backslash C_{3})  \notag \\
    & = 0 + 0 + (-1)\notag \\ 
    &   = -1. \notag
\end{align}
and,
    \begin{align}  \sum_{T_2}  \det(G \backslash T_2) =&
    \sum_{R_{1} \in C_{4k}(G)}  \sum_{R_{2} \in C_{4k}(G\backslash R_{1})} \det(G \backslash R_{1}\backslash R_{2} ) \notag \\
    &= \det(G \backslash C_{1} \backslash C_{3}) + \det(G \backslash C_{2} \backslash C_{3}) \notag \\ 
    & \quad + \det(G \backslash C_{3} \backslash C_{1}) + \det(G \backslash C_{3} \backslash C_{2}) \notag \\ 
    & =  (-1) +  (-1) + (-1) + (-1) \notag \\
    & = -4. \notag
\end{align}
Hence, using~\eqref{eq:example}, we get:
\begin{align}
     \per(G) & = (-1)^\frac{10}{2} \{ 0 + 4(-1)+ \frac{4^{2}}{2!}(-4) \}\notag \\
     & = 36. \notag
 \end{align}
\end{example}

\section{Discussion} \label{Section_4}

Theorem \ref{theorem1} can be utilized to compute the permanent for any bipartite graph. Since permanent is closely related to the count of perfect matching, this formula could also be used to compute the number of perfect matching of bipartite graphs, where the biadjacency matrix itself serves as the adjacency matrix of another bipartite graph. The formula consists of two summations. The first summation runs from 1 to $m$, where $m$ is the maximum number of vertex-disjoint $4k$-cycles. Since $m$ can be at most $n/4$, this summation contributes at most $n/4$ terms. However, the second summation is taken over all $z$-tuples of mutually vertex-disjoint $4k$-cycles, which can grow exponentially with the size of the graph, making it computationally challenging. Thus, the main challenges in this process include efficiently identifying all $4k$-cycles within the graph and ensuring that their number does not grow too large, as it directly impacts computational efficiency.

These challenges can be addressed by identifying classes of bipartite graphs that contain only a limited number of $ 4k$ cycles and allow efficient enumeration of these cycles. Trees, acyclic structures, and cycle graphs are particularly easy to handle using this formula due to their simpler structures. In general, enumerating all cycles in a graph is an NP-hard problem, but recent advancements~\cite{ref_article3,ref_article11} have shown that all short cycles up to length $2g-2$ in a bipartite graph can be enumerated efficiently. Therefore, it is worth investigating bipartite graphs that are characterized by the presence of only short cycles.

The bipartite cactus graph, a connected graph that is known for sharing at most one vertex between any two cycles, along with the condition:
\[g > \frac{n + \binom{c}{2}+c}{c + 2}, \]
where $g, n, c$ are the number of girth, the number of vertices, and the number of cycles of length $g$, respectively, is one such candidate where the theorem can be used efficiently. It would be valuable to identify more classes of bipartite graphs where this formula can be applied efficiently. 

\section*{Acknowledgment}
We thank Hitesh Wankhede, Noga Alon, and RB Bapat for their valuable suggestions. We also thank the anonymous reviewers for their insightful comments and acknowledge financial support from the DST INSPIRE grant.

%
%
\bibliographystyle{splncs04}
%

\end{document}